\documentclass[conference]{IEEEtran}
\usepackage{times}

% numbers option provides compact numerical references in the text. 
\usepackage{amsmath}
\usepackage{amsthm}
\usepackage{amssymb}
\usepackage[utf8]{inputenc}
\usepackage[T1]{fontenc}
\usepackage{graphicx}
\usepackage{enumerate}
\usepackage{xcolor}
\usepackage{comment}
\usepackage[numbers,sort&compress]{natbib}
  %\small %\footnotesize %\scriptsize
\usepackage{multicol}
\usepackage{hyperref}
\usepackage{tabularx}
\usepackage{makecell}
\usepackage{colortbl}
\usepackage{tablefootnote}
\usepackage{multirow}
\usepackage{paralist}
\usepackage{placeins}
\usepackage[ruled,vlined,linesnumbered]{algorithm2e}

\usepackage[caption=false]{subfig}

\hypersetup{
    colorlinks=true,
    linkcolor=blue,
    filecolor=magenta,      
    urlcolor=cyan,
}

\newtheorem{theorem}{Theorem}
\newtheorem{proposition}{Proposition}
\newtheorem{lemma}{Lemma}

\theoremstyle{definition}
%[section]

\IEEEoverridecommandlockouts

\begin{document}

% paper title
%\title{Gaussian Process Barrier States for Safe Autonomy}

\title{Gaussian Process Barrier States for  Safe Trajectory Optimization and Control}

%\title{Learning Bayesian Barrier States for Safe Trajectory Optimization and Control using Gaussian Processes}
%\title{Learning Safe Trajectory Optimization and Control via Bayesian Barrier States}

%\title{Learning Bayesian Barrier States for Control Safety-Critical Systems}

%\title{Data-Driven Safety Embedded Trajectory Optimization and Control of Uknown Systems via Barrier States}

%\title{Data-Driven Safe Autonomy via Bayesian Barrier States}

%\title{Data-Driven Safe Trajectory Optimization and Control via Bayesian Barrier States}

%\title{Data-Driven Safe Autonomy Using Barrier States and Gaussian Processes}

\author{Hassan Almubarak, Manan Gandhi, Yuichiro Aoyama, Nader Sadegh, and Evangelos A. Theodorou$^{*}$
%\thanks{$^{1}$\scriptsize School of Electrical and Computer Engineering \tt{halmubarak@gatech.edu}}
%\thanks{$^{2}$\scriptsize School of Computer Science {\tt kwstach@gatech.edu}}
%\thanks{$^{3}$\scriptsize George W. Woodruff School of Mechanical Engineering {\tt sadegh@gatech.edu}}
%\thanks{$^{4}$\scriptsize Daniel Guggenheim School of Aerospace Engineering {\tt evangelos.theodorou@gatech.edu}}
\thanks{$^{*}$Georgia Institute of Technology, Atlanta, GA, USA}}
%\thanks{$^{5}$\scriptsize Control and Instrumentation Engineering Department, King Fahd University of Petroleum \& Minerals, Dhahran, Saudi Arabia}%      
%}

% avoiding spaces at the end of the author lines is not a problem with
% conference papers because we don't use \thanks or \IEEEmembership
\maketitle

\begin{abstract}
This paper proposes embedded Gaussian Process Barrier States (GP-BaS), a methodology to safely control unmodeled dynamics of nonlinear system using Bayesian learning. Gaussian Processes (GPs) are used to model the dynamics of the safety-critical system, which is subsequently used in the GP-BaS model. We derive the barrier state dynamics utilizing the GP posterior, which is used to construct a safety embedded Gaussian process dynamical model (GPDM). We show that the safety-critical system can be controlled to remain inside the safe region as long as we can design a controller that renders the BaS-GPDM's trajectories bounded (or asymptotically stable). The proposed approach overcomes various limitations in early attempts at combining GPs with barrier functions due to the abstention of restrictive assumptions such as linearity of the system with respect to control, relative degree of the constraints and number or nature of constraints. This work is implemented on various examples for trajectory optimization and control including optimal stabilization of unstable linear system and safe trajectory optimization of a Dubins vehicle navigating through an obstacle course and on a quadrotor in an obstacle avoidance task using GP differentiable dynamic programming (GP-DDP). The proposed framework is capable of maintaining safe optimization and control of unmodeled dynamics and is purely data driven.
\end{abstract}

\IEEEpeerreviewmaketitle
% \vspace{-2mm}

\section{Introduction}
Autonomy has increasingly been in demand with recent advances in robotic technologies and applications in automobile, healthcare, agriculture, and other fields to increase efficiency, safety, consistency and to lower costs. Undoubtedly, safety in modern robotic systems is a critical task in designing successful autonomous systems. Nonetheless, with the increasing demand for autonomy, designing safe and effective algorithms is an increasingly challenging task, especially in unknown or dynamic environments. Classical methods to safe autonomy depend on model based methods which require high knowledge of the physics of the overall system, e.g. the autonomous system and its environment. For many autonomous systems, however, developing such models is not attainable. Data-driven representations of dynamic systems have been shown to be capable of providing highly accurate dynamic models. Combining such methods with model based safety certificates and planning algorithms can provide great practical solutions. Gaussian processes (GPs) regression has emerged as an efficient Bayesian supervised learning tool \cite{rasmussen2003gaussian} and has been considered extensively in forecasting \cite{girard2002gaussian} and in the robotics and control literature for efficient learning \cite{wang2005gaussian, deisenroth2013gaussian, marco2016automatic, umlauft2017feedback, beckers2016stability, castaneda2021gaussian, Pan2018}. In this work, we develop a novel approach of incorporating a model-based barrier method, namely barrier states, with Gaussian processes to design a data driven multi-objective optimization and control of unknown dynamics.  

\begin{figure} [t]
        \centering
        \vspace{-18mm}
    \subfloat{\includegraphics[trim=50 70 50 80, clip, width=0.85\linewidth]{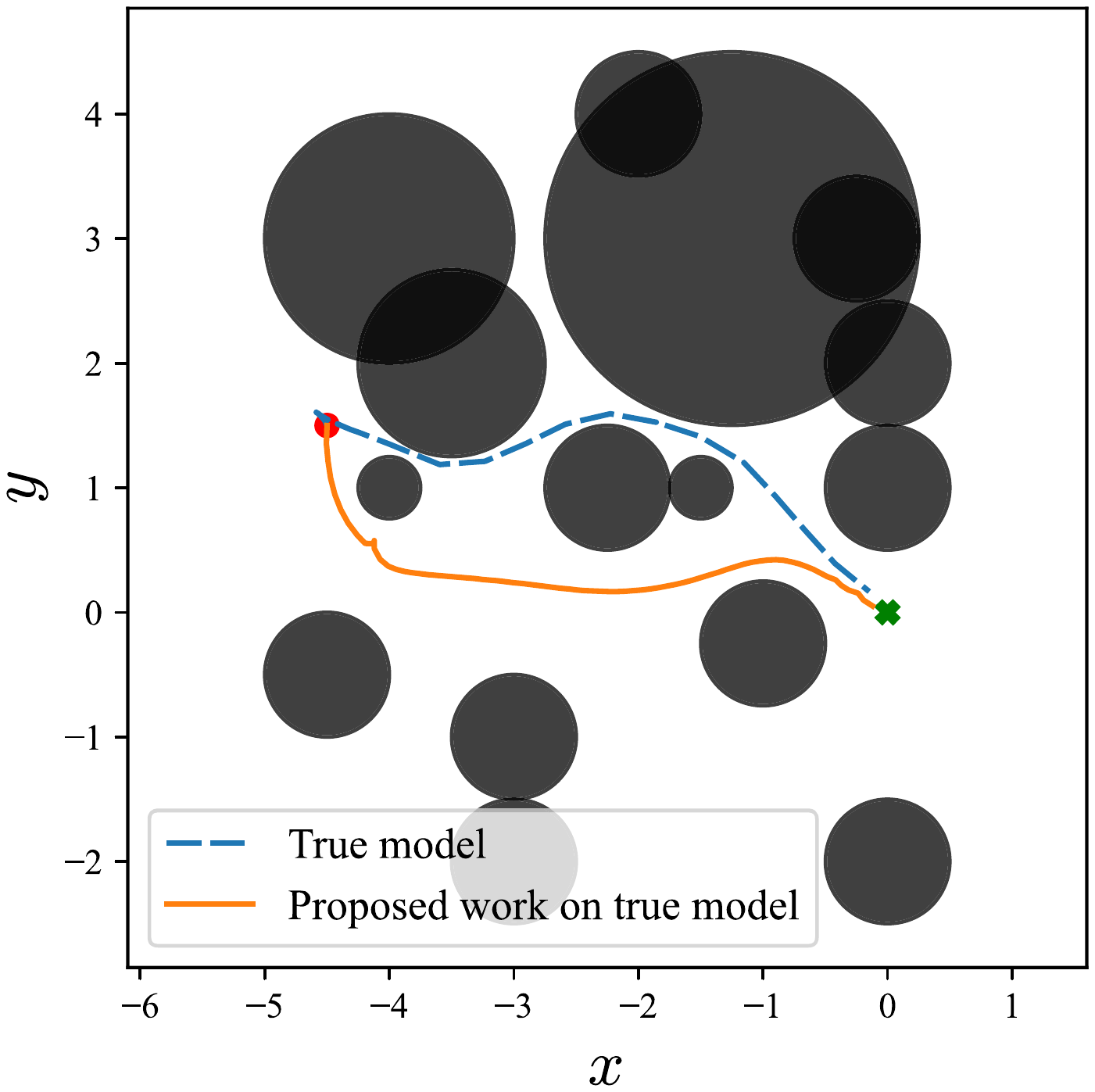}}
        \\ \vspace{-18mm} \hspace{-3mm}
    \subfloat{\includegraphics[trim=30 50 50 200, clip, width=0.65\linewidth]{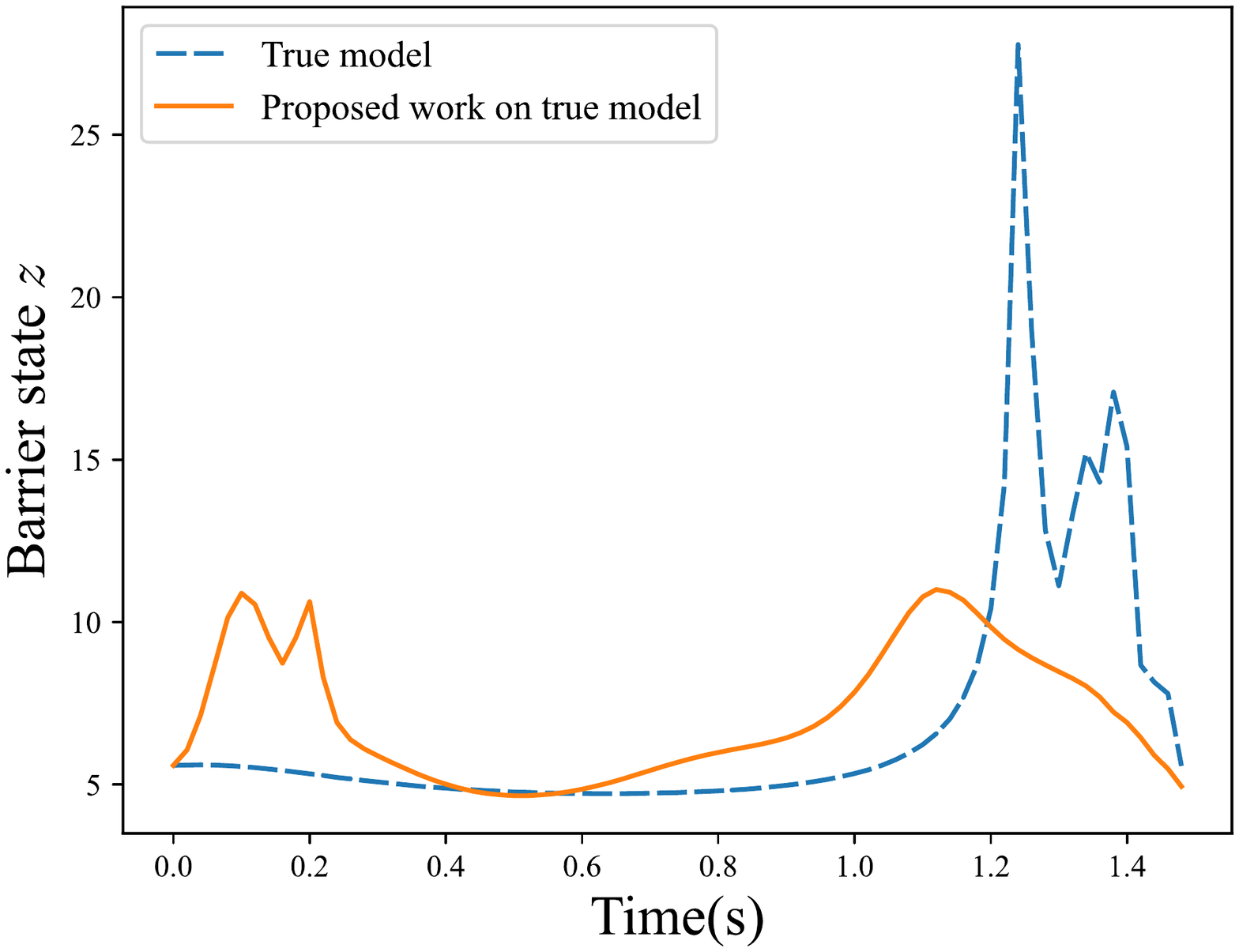}}
    \vspace{-15mm}
   \caption{Closed-loop trajectories of a Dubins robot navigating an obstacle course using true model (blue) and using the proposed framework (orange). The learned GP dynamics with the constructed GP-BaS used with GP-DDP \cite{pan2015data} to solve the problem in a purely data driven fashion. As shown, using the true system with BaS-DDP \cite{almubarak2021safeddp}, the robot is able to navigate in narrow areas while the proposed data-driven takes a more conservative path to stay away from the obstacles given uncertain dynamics.}
      \label{fig:dubins- many obst}
      \vspace{-6mm}
\end{figure} 
    
\subsection{Related Work}
To account for uncertain or unknown systems' models within barriers based safety-critical control, GPs have been used with barrier certificates. \citet{Wang2018} used barrier certificates to keep a quadrotor in a partially unknown environment within known safe regions and expand the safe regions as more data are collected and used to update a GP model of the system. In \cite{Jagtap2020}, GPs were used to learn the drift term of the model offline and provide Control Barrier Functions (CBFs) synthesis with estimated confidence bounds that depend on the validity of the learned model using Monte-Carlo simulations. %In their work, Monte-Carlo approximations were used to obtain the probability bound on the accuracy of the learned model which is then used to estimate the confidence bounds. 
This work considers a relative degree one CBF synthesis for which the input gain matrix is known. \citet{dhiman2021control} proposed a more general framework using matrix variate GPs models to learn the unknown system's dynamics, drift term and input gain, online for systems linear in control to enforce stability and safety constraints. Nonetheless, although multiple methods including covariance vectorization are used, the main pitfall of the approach comes from the fact that to learn the full dynamics $n(1+m)$ independent GP models need to be learned which inflates the problem's computational complexity. The relative degree of the constraint is also assumed to be known. \citet{castaneda2022probabilistic} considers a semi-model based approach in which a nominal model is exploited to be used in CBF while collecting data online in an event-triggered fashion to guarantee recursive feasibility of safe controls of the unknown system given a relative degree one CBF with respect to the system. 

Barrier states (BaS) embedding methodology has emerged recently as a viable method for enforcing safety in multi-objective control and optimization \cite{Almubarak2021SafetyEC,almubarak2021safeddp}. This method enforces safety through embedding the dynamics of the barrier function into the model of the safety-critical system effectively transforming the safety constraints into performance objectives of a new fictitious system. As a barrier based method, barrier states need the model of the system to guarantee safety. When an uncertain model is used, safety guarantees are no longer applicable. In \cite{almubarak2021safeminmax}, a min-max optimization approach was used to provide robustness against model uncertainty. Nonetheless, safety guarantees depend on the feasibility of the dynamic game's solution which may not always exist. In this work, we utilize Gaussian processes to obtain a posterior distribution of the unknown dynamics to construct Gaussian process barrier states (GP-BaS) for purely data-driven multi-objective safe optimizations and controls. 

\subsection{Contributions and Organization}
The main contribution of the proposed work lies in using GPs as a Bayesian learning tool to construct data-driven probabilistic safety guarantees through barrier states and design safe optimal control and planning. As a result of using BaS with GPs, various advantages over related work are gained. This includes, avoiding the assumptions on the linearity of the system with respect to the control which facilitates using developed efficient learning schemes to learn the fully unknown dynamics. Additionally, further assumptions on the relative degree of the barrier function with respect to the system and designing a barrier function for each constraint, relaxation between safety and performance objectives, are avoided as well. Moreover, we consider the problem of trajectory optimization of the unknown system. To achieve our objectives, we derive Gaussian Barrier states (GP-BaS) dynamics that are used to derive probabilistic safety guarantees. The GP-BaS is used to construct a safety embedded Gaussian process dynamical model (GPDM) that is used for control design. Consequently, if this model's states trajectories are rendered bounded, then safety of the safety-critical system is guaranteed in probability. Furthermore, we use uncertainty propagation for safe trajectory optimization within a differentiable dynamic programming (DDP) framework. 

The paper is organized as follows. In  \autoref{Sec:Problem Statement}, we start with the problem statement and briefly review barrier states for known dynamics. \autoref{Section: Bayesian Barrier States} contains the main contribution of the work in which GP-BaS dynamics and the safety embedded Gaussian system are derived. In this section, probabilistic safety guarantees based on the derivations of GP-BaS are provided along with using uncertainty propagation for trajectory optimization. In \autoref{Section: Applications and Examples}, we implement the proposed paradigm on various systems including optimal safe control of an unstable linear system as a proof of concept, and trajectory optimization of Dubins vehicle in which we show that the proposed approach can produce a more conservative solution as in \autoref{fig:dubins- many obst} and a quadrotor navigating an obstacles course using GP-DDP \cite{pan2015data}. Finally, concluding remarks and future directions are provided in \autoref{Section: Conclusion}.

\section{Problem Statement} \label{Sec:Problem Statement}
Consider the general \textbf{unknown} nonlinear control system
\begin{equation} \label{eq:dynamics}
    \dot{x}= f(x,u)
\end{equation}
where $t \in \mathbb{R}^+$, $x \in \mathcal{X} \subset \mathbb{R}^n$, $u \in \mathcal{U} \subset \mathbb{R}^m$ and $f:\mathcal{X} \times \mathcal{U} \rightarrow \mathcal{X}$ is continuously differentiable.
%Unless  necessary, time dependence is dropped throughout the paper for notational convenience. 
We wish to design a \textit{safe feedback} controller $u(x)$ that achieves performance objectives and observe the safety condition $h(x)>0$ where $h(x)$ is a continuously differentiable function that describes the safe set $\mathcal{S}=\{x|h(x)>0\}$. The performance objectives we consider in this paper include enforcing stability, in which we assume $f(0,0) = 0$ without loss of generality, as well as trajectory planning. As the dynamics of the system is unknown, we use Gaussian processes regression to learn a GP model for $f$ given $x$, $u$ and $\dot{x}$. 

\subsection{Embedded Barrier States for Multi-objective Control of Known Dynamics}
Embedded barrier states emerged as a model-based methodology that allows us to construct multi-objective control for safety-critical system without relaxation of one of the objectives. This technique is flexible enough to be integrated with existing control techniques for general class of nonlinear systems and for general nonlinear constraints \cite{Almubarak2021SafetyEC,almubarak2021safeddp}. In principle, those barrier states (BaS) create barriers in the state space of a new system forcing the search of a multi-objective feedback control law to the set of safe controls. Specifically, the barrier states are appended to the model of the safety-critical dynamical system to generate a system that is safe if and only if its trajectories are rendered bounded. In the case of stabilization, for example, safety is guaranteed if the augmented system's equilibrium point of interest is asymptotically stabilized. Therefore, since safety and stability have been unified, designing a stabilizing controller for the new model means a stabilizing safe control for the original dynamical system \cite{Almubarak2021SafetyEC}. In other words, the safety property is embedded in the stability of the system as the feedback controller is a function of the system's states and the barrier states, hence the name \textit{safety embedded control}. The proposed method is general enough to be used with any valid barrier function and for general class of nonlinear systems and constraints. Unlike control barrier functions (CBFs), no explicit knowledge of the relative degree of the system with respect to the output describing the safe region is required and therefore the constraints can be directly used to construct the barriers.

For the nonlinear system \eqref{eq:dynamics} and the safe set $\mathcal{S}$ defined by the safety function $h(x)$, define a barrier function $\beta(x)=B(h(x))$. The idea is to augment the open loop system with a new state variable $z:= \beta(x) - \beta(0)$ to ensure that $z=0$ is an equilibrium state. Then, the barrier state equation can be given as 
\begin{equation*}
 f^z := \dot{z} = B'\big(B^{-1}(z+\beta_0) \big) L_{f(x,u)}h(x) - \gamma \big(z+\beta_0 - B(h(x)) \big)
\end{equation*}
for $\gamma \in \mathbb{R}^+_0$. Note that shifting the equilibrium point of the barrier state is not necessary but ensures that the origin is an equilibrium state of the augmented system as well. For notational convenience, we define $\mathcal{B}(x,z):=B'\big(B^{-1}(z+\beta_0) \big) h_x(x)$ and thus the BaS dynamics can be written as
\begin{equation} \label{eq:zdot-barrier state}
 \dot{z} = f^z(x,z,u) = \mathcal{B}(x,z) f(x,u) - \gamma \big(z+\beta_0 - B(h(x)) \big)
\end{equation}

%%%%%%%%%%%%%%%%%%%%%%%%%% New BaS clarifications %%%%%%%%%%%%%%%%%%%%%%%%%%%
%Continuous time (BaS) without shift or modification can be written as:
%\begin{equation}
%    \dot{\beta} = B'(h(x)) \dot{h} = B'\circ B^{-1} (\beta(x)) h_x f(x,u) = \beta_x f(x,u)
%\end{equation}
%However, this does not guarantee stabilizability of the embedded system. In our first paper , we used $\phi_1$ which is designed for every different barrier function. However, this design of $\phi_1$ may not be intuitive and thus we have recently proposed (\textit{not yet published work}) to simply add the error between the derivative of $\beta$ and the value of the barrier $B(h(x))$. More specifically, 
%This serves exactly as needed for stabilization and satisfy the two conditions in  (if we let $\phi_1(\zeta,\eta) = \zeta  - B(\eta)$):
%\begin{equation}
%\phi_1(\beta(x),h(x))=0 \ \text{and} \ \frac{\partial \phi_1}{\partial \zeta}(\beta_0,h(0)) > 0 \end{equation}
%%%%%%%%%%%%%%%%%%%%%%%%%% New BaS clarifications %%%%%%%%%%%%%%%%%%%%%%%%%%%

It is worth noting that multiple constraints can be combined to construct a single BaS \cite{Almubarak2021SafetyEC,almubarak2021safeddp}. This is of a great advantage instead of stacking multiple constraints especially for applications that involve a large number of constraints, e.g. in collision avoidance in cluttered environments. Nonetheless, in some applications, one may want to construct multiple barrier states to decouple the treatment of the different constraints. Therefore, for the BaS vector $z=[z_1, \dots, z_q]^{\mathsf{T}} \in \mathcal{Z} \subseteq \mathbb{R}^q$ where $q$ is the number of barrier states, the augmented system is given by 
\begin{equation} \begin{split} \label{eq:new system, safety augmented}
    & \dot{\bar{x}}= \bar{f}(\bar{x}, u) \\ 
\end{split} \end{equation}
where $\bar{x}=\begin{bmatrix} x \\ z \end{bmatrix}, \bar{f}=\begin{bmatrix} f \\ f ^z\end{bmatrix}$ with $\bar{f}(0)=0$ preserves the continuous differentiability and stabilizability of the original control system \eqref{eq:dynamics}. Therefore, the safety constraint is \textit{embedded} in the closed-loop system's dynamics and stabilizing the safety embedded system \eqref{eq:new system, safety augmented} implies enforcing safety for the safety-critical system \eqref{eq:dynamics}, i.e. forward invariance of the safe set $\mathcal{S}$ with respect to \eqref{eq:dynamics}.
\begin{proposition}[\cite{Almubarak2021SafetyEC}] \label{prop: safety if bas is bounded}
A continuous feedback controller $u=K(x)$ is said to be safe, i.e. the safe set $\mathcal{S}$ is controlled invariant with respect to the closed-loop system $\dot{x}=f\big(x,K(x)\big)$, if and only if $z(0)<\infty \Rightarrow z(t)<\infty \ \forall t$.
\end{proposition}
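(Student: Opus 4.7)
The plan is to reduce the statement to the observation that the barrier state $z$ is finite exactly when $x$ lies in the safe set $\mathcal{S}$, and then unwind the definition of controlled invariance. The essential property being used is that any valid barrier function $B$ is continuous on $(0,\infty)$, is only defined for positive arguments, and satisfies $B(h)\to\infty$ as $h\to 0^+$. Combined with the definition $z = \beta(x) - \beta(0) = B(h(x)) - B(h(0))$, this yields the key equivalence
\begin{equation*}
z(t) < \infty \iff h(x(t)) > 0 \iff x(t) \in \mathcal{S}.
\end{equation*}

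For the forward implication, I would assume that $K$ renders $\mathcal{S}$ controlled invariant for the closed-loop system $\dot{x}=f(x,K(x))$. Given any initial condition with $z(0)<\infty$, the equivalence above forces $x(0)\in\mathcal{S}$, so by controlled invariance $x(t)\in\mathcal{S}$ for all $t\ge 0$, hence $h(x(t))>0$ and the same equivalence applied in reverse gives $z(t) = B(h(x(t))) - B(h(0)) < \infty$ for every $t$.

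For the converse, I would take $x(0)\in\mathcal{S}$, so $z(0)<\infty$, and argue by contradiction that $x(t)\in\mathcal{S}$ for all $t$ on the interval of existence. Suppose some trajectory were to leave $\mathcal{S}$; then by continuity of the solution and of $h$, there would be a first time $t^{\star}$ at which $h(x(t^{\star}))=0$, with $h(x(t))\searrow 0$ as $t\nearrow t^{\star}$. The divergence property $B(h)\to\infty$ as $h\to 0^+$ then forces $z(t)\to\infty$, contradicting the hypothesis $z(t)<\infty$ for all $t$. Since the initial condition in $\mathcal{S}$ was arbitrary, $\mathcal{S}$ is controlled invariant.

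The main subtlety, rather than an obstacle, is the implicit handling of the interval of existence of $z$: the BaS dynamics \eqref{eq:zdot-barrier state} only makes sense while $h(x)>0$, so the statement "$z(t)<\infty\ \forall t$" must be understood as saying the trajectory does not escape in finite time via blow-up of the barrier state. Making this precise requires invoking continuity of solutions to $\dot{x}=f(x,K(x))$ guaranteed by continuity of $K$ and continuous differentiability of $f$, together with the fact that $B\circ h$ is continuous on $\mathcal{S}$. Once that is in place, the proof is essentially a restatement of the definition of $z$ through the strictly monotonic, singularity-at-the-boundary structure of the barrier function.
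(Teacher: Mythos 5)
This proposition is quoted from \cite{Almubarak2021SafetyEC} and the paper gives no proof of its own; your argument --- finiteness of $z$ is equivalent to $h(x)>0$ because the barrier $B$ diverges as $h\to 0^{+}$, combined with continuity of the closed-loop solution to locate a first boundary-crossing time --- is precisely the standard argument behind the cited result, and it is correct. The one step worth making explicit is that the identification $z(t)=B(h(x(t)))-\beta_0$ along trajectories uses the initialization $z(0)=\beta(x(0))-\beta(0)$, since the $-\gamma\big(z+\beta_0-B(h(x))\big)$ term in \eqref{eq:zdot-barrier state} then keeps the discrepancy identically zero (and in general only lets it decay exponentially), so boundedness of $z$ and of $B(h(x(t)))$ indeed coincide.
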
 
This implies that boundedness of $z$ guarantees satisfaction of the safety constraint. As a consequence, the forward-invariance of $\mathcal{S}$, i.e. safety of the safety-critical system, can be tied to the performance objectives of the safety embedded system \eqref{eq:new system, safety augmented}. In other words, boundedness of the BaS implies the generation of safe trajectories \cite{almubarak2021safeddp}. For stabilization objectives specifically, rendering the augmented systems asymptotically stable implies enforcing asymptotic stability as well as safety for the safety-critical system as shown by the following theorem. 
\begin{theorem}[\cite{Almubarak2021SafetyEC}] \label{theorem: original safe if stable theorem}
Assume there exists a continuous feedback controller $u=K(\bar{x})$ such that the origin of the safety embedded closed-loop system, $\dot{\bar{x}}=\bar{f}\big(\bar{x}, K(\bar{x})\big)$, is asymptotically stable. As a result, there exists an open neighborhood $\mathcal{D}$ of the origin such that $u=K(\bar{x})$ is safe, i.e. $x(t) \in \mathcal{S} \ \forall t$. 
\end{theorem}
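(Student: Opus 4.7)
The plan is to chain asymptotic stability of the augmented system into the equivalence supplied by Proposition~\ref{prop: safety if bas is bounded}. The key observation is that asymptotic stability is strictly stronger than what Proposition~\ref{prop: safety if bas is bounded} needs: it gives not only convergence to the origin but also Lyapunov stability, which in turn supplies the uniform boundedness of $z(t)$ that the proposition wants.

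First I would unpack the hypothesis. By definition, asymptotic stability of the origin of $\dot{\bar{x}} = \bar{f}(\bar{x}, K(\bar{x}))$ means there exists an open neighborhood $\mathcal{D}$ of the origin in the augmented state space $\mathcal{X} \times \mathcal{Z}$ such that every trajectory with $\bar{x}(0) \in \mathcal{D}$ (i) remains in some bounded neighborhood of the origin for all $t \ge 0$ (Lyapunov stability) and (ii) satisfies $\bar{x}(t) \to 0$ as $t \to \infty$ (attractivity). I would take $\mathcal{D}$ to be (any open subset of) this region of attraction; this is the neighborhood whose existence the theorem asserts.

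Next I would isolate the barrier-state component. For any $\bar{x}(0) = (x(0), z(0)) \in \mathcal{D}$, decompose $\bar{x}(t) = (x(t), z(t))$. Lyapunov stability yields a constant $M < \infty$, depending only on $\mathcal{D}$, with $\|\bar{x}(t)\| \le M$ for all $t \ge 0$; in particular $|z(t)| \le M$ for all $t$. Since $\mathcal{D}$ contains the origin, $z(0)$ is finite, so the implication $z(0) < \infty \Rightarrow z(t) < \infty\ \forall t$ holds along the closed-loop trajectory.

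Finally I would apply Proposition~\ref{prop: safety if bas is bounded} to the continuous feedback $u = K(\bar{x})$ to conclude that $\mathcal{S}$ is controlled invariant along these trajectories, i.e.\ $x(t) \in \mathcal{S}$ for all $t \ge 0$. As a sanity check, this is consistent with the explicit form of the barrier state: since $z = B(h(x)) - B(h(0))$ and the barrier $B$ blows up as $h \to 0^+$, boundedness of $z(t)$ forces $h(x(t))$ to stay uniformly bounded away from $0$, hence $x(t) \in \mathcal{S}$. There is no real obstacle here — the theorem is essentially a corollary of Proposition~\ref{prop: safety if bas is bounded}; the only care needed is to note that asymptotic stability (not just attractivity) is what delivers the uniform-in-$t$ bound on $z$, and to take $\mathcal{D}$ inside the region of attraction so that both the Lyapunov bound and the safety implication apply simultaneously.
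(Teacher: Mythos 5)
Your proposal is correct and matches the intended argument: the paper states this theorem without proof (citing \cite{Almubarak2021SafetyEC}), and the underlying proof there proceeds exactly as you do --- asymptotic stability of the embedded closed-loop system keeps $\bar{x}(t)$, and hence $z(t)$, bounded along trajectories starting near the origin, and Proposition~\ref{prop: safety if bas is bounded} (equivalently, the fact that $B(h(x))$ blows up as $h \to 0^+$) converts boundedness of the barrier state into $x(t) \in \mathcal{S}$ for all $t$. One cosmetic remark: you do not need (and, if $\mathcal{D}$ is taken as the full region of attraction, may not have) a single uniform bound $M$ over all of $\mathcal{D}$; per-trajectory boundedness of $z(t)$, which stability plus attractivity already provides, is all that Proposition~\ref{prop: safety if bas is bounded} requires, or one may simply shrink $\mathcal{D}$ to a bounded neighborhood inside the region of attraction.
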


\section{Bayesian Barrier States} \label{Section: Bayesian Barrier States}
As the embedded barrier states method is a model-based approach, the unknown system's dynamics $f(x,u)$ is needed in the BaS equation for control design and in the propagation of the BaS to be used in the feedback controller for safety guarantees.

Viewing state transitions as an inference problem, we use GPs to learn a dynamical model of the system. In essence, we assume that system's transition for a specific instant in time can be approximated with a Gaussian random variable, effectively approximating the dynamics with independent Gaussian processes. A huge advantage of GPs is that a GP is completely defined by its mean and variance. We use GPs predictions to provide mean and variance for $f$, i.e. we have probabilistic estimations. Consequently, the BaS is now a random variable and safety is probabilistic in general. Given a prior distribution of $f(x,u)$, mean and variance, we wish to compute its posterior distribution using state and control observations. Note that in this step, we assume availability of $\dot{x}$, which can be approximated using historical data. Our objective is to satisfy the safety and performance objectives while mitigating the impact of estimation error.

%For the safety-critical system \eqref{eq:dynamics}, the dynamics prior (denoted by the subscript 0) is given by %(we use $f(x)$ since $u$ is a feedback controller, i.e. a function of $x$???)
%\begin{equation}
%    f \sim \mathcal{GP}\big(\mathbb{E}_0(x), \mathbb{COV}_0(x,x') \big)
%\end{equation}
%where $\mu(x)$ is the mean and $\Sigma$ is the covariance matrix.
Let $\hat{x}:=(x,u)$ be a pair of state and control and let us define the series of $N$ data points $\hat{X}:=\{ \hat{x}_0, \dots, \hat{x}_N\}$ and the corresponding transition of the dynamics
%$\dot{X}=\{\dot{x}_0, \dots, \dot{x}_T\}$
$F:=\{f_0, \dots, f_N\}$. %for some final time $T$. 
Given independent identically distributed (iid) outputs, with test data $\hat{x}^*:=(x^*,u^*)$ and assuming zero mean process, without loss of generality, with variance $\sigma^2_n$, the GP regression technique uses marginalization to predict the joint distribution of the output of the observed data and the output of the test data. More specifically, 
\begin{equation}
    p \begin{pmatrix} F \\ f^* \end{pmatrix} \sim \mathcal{N} \Big( 0, \begin{bmatrix} {\rm{K}}(\hat{X},\hat{X}) + \sigma^2_n I & {\rm{K}}(\hat{X},\hat{x}^*) \\
    {\rm{K}}(\hat{x}^*, \hat{X}) & {\rm{K}}(\hat{x}^*, \hat{x}^*) \end{bmatrix} \Big)
\end{equation}
where the covariance is defined by the kernal matrix ${\rm{K}}$, which can be viewed as a similarity measure. This is used to get the predictive mean and the predictive variance of the GP model of $f$, also known as the posterior distribution, as
%\begin{align}
%    p(f|\hat{x}) = \mathcal{N}(f | \mu, \Sigma) \\
%    \mu = \mathbb{E}[\dot{x}], \quad \Sigma = \mathbb{V}[\dot{x}]
%\end{align}
\begin{align} \label{eq:system's Gaussian model}
 {\mathbb{E}}[f] & = \mu(\hat{x}) = {\rm{K}}(\hat{x}^*,\hat{X}) \mathbf{K}^{-1} F \\
 \mathbb{V}[f] & = \Sigma(\hat{x}) = {\rm{K}}(\hat{x}^*,\hat{x}^*) - {\rm{K}}(\hat{x}^*, \hat{X}) \mathbf{K}^{-1} {\rm{K}}(\hat{X},\hat{x}^*)
\end{align}
where $\mathbf{K}={\rm{K}}(\hat{X},\hat{X}) + \sigma^2_n I$ and $\mu$ and $\Sigma$ represent the Gaussian distribution of $f \sim \mathcal{N} (\mu, \Sigma)$.

\subsection{Gaussian Process Barrier States (GP-BaS)}
Using the definition of the GP model of the dynamical system and the definition of barrier states in \autoref{Sec:Problem Statement}, we can develop the following results.
\begin{lemma} \label{lemma: gaussian bas lema}
For the unknown dynamical system \eqref{eq:dynamics} with the Gaussian model \eqref{eq:system's Gaussian model}, the barrier states dynamics, $f^{z}$, are Gaussian processes and thus can be completely defined by:
\begin{align} \label{eq:E[fz]}
     & {\mathbb{E}}[f^{z}|\hat{x}]  =  \mu^z = \mathcal{B}(x,z) {\mathbb{E}}[f(x,u)|\hat{x}]  - \gamma \big(z+\beta_0 - B(h(x)) \big)
     \\ \label{eq:Var[fz]}
     & \mathbb{V}[f^{z}|\hat{x}]  =  \Sigma^z = \mathcal{B}(x,z)  \mathbb{V}[f(x,u)|\hat{x}] \mathcal{B}(x,z)^{\mathsf{T}}
\end{align}
Those equations represent the \textbf{Gaussian Process Barrier States (GP-BaS)}, $f^z \sim \mathcal{N} (\mu^z, \Sigma^z)$. 
\end{lemma}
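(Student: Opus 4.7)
The plan is to recognize that the barrier state dynamics equation \eqref{eq:zdot-barrier state} is affine in $f(x,u)$: fixing the query point $\hat{x}=(x,u)$ (and the current barrier state $z$), the mapping $f \mapsto f^z = \mathcal{B}(x,z)\,f - \gamma\bigl(z+\beta_0 - B(h(x))\bigr)$ has a deterministic linear coefficient $\mathcal{B}(x,z)$ and a deterministic additive bias $-\gamma(z+\beta_0-B(h(x)))$. Since any affine transformation of a Gaussian random variable is again Gaussian, $f^z$ conditioned on $\hat{x}$ is Gaussian, and the entire claim reduces to computing its first two moments.

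I would first carry out the mean computation: apply linearity of conditional expectation to \eqref{eq:zdot-barrier state} and substitute $\mathbb{E}[f\mid \hat{x}]=\mu(\hat{x})$ from \eqref{eq:system's Gaussian model}. The deterministic bias term passes through unchanged, yielding \eqref{eq:E[fz]}. Then I would compute the variance by observing that the bias term contributes nothing to variance and $\mathcal{B}(x,z)$ is deterministic, so the standard affine-transformation identity $\mathbb{V}[Ax+b]=A\mathbb{V}[x]A^{\mathsf{T}}$ gives $\mathbb{V}[f^z\mid\hat{x}] = \mathcal{B}(x,z)\,\Sigma(\hat{x})\,\mathcal{B}(x,z)^{\mathsf{T}}$, which is \eqref{eq:Var[fz]}. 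Together these two moments fully characterize $f^z\sim\mathcal{N}(\mu^z,\Sigma^z)$.

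To upgrade from pointwise Gaussianity to the stronger claim that $f^z$ is a \emph{Gaussian process}, I would argue at the level of finite-dimensional distributions. For any finite collection of query points $\{\hat{x}_i\}_{i=1}^{N}$ with associated barrier-state values $\{z_i\}$, the vector $(f(\hat{x}_1),\ldots,f(\hat{x}_N))$ is jointly Gaussian by hypothesis on the GP prior over $f$. Stacking the corresponding $f^z$ values is an affine map of this jointly Gaussian vector, where the linear block is the block-diagonal matrix $\mathrm{diag}(\mathcal{B}(x_i,z_i))$ and the bias block is deterministic. The image is therefore jointly Gaussian, which is precisely the defining property of a Gaussian process; the induced kernel between two query points is $\mathcal{B}(x,z)\,{\rm K}_f\bigl(\hat{x},\hat{x}'\bigr)\,\mathcal{B}(x',z')^{\mathsf{T}}$ and evaluates on the diagonal to \eqref{eq:Var[fz]}.

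The main obstacle is not algebraic but conceptual: one must be careful that $\mathcal{B}(x,z)$ is treated as deterministic when conditioning on the query point $\hat{x}$ and on the instantaneous barrier-state value $z$, and that the $-\gamma\bigl(z+\beta_0-B(h(x))\bigr)$ term is also a function of the query rather than of the random $f$. Under these natural interpretations---which are implicit in how the GP posterior is used at a given $\hat{x}$---the result is an immediate consequence of closure of Gaussians under affine maps, and no further assumption on the kernel ${\rm K}$ or on the barrier transformation $B(\cdot)$ is needed beyond the differentiability already imposed in \autoref{Sec:Problem Statement}.
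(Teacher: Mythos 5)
Your proposal is correct and takes essentially the same route as the paper: treat $\mathcal{B}(x,z)$ and the $-\gamma\bigl(z+\beta_0-B(h(x))\bigr)$ term as deterministic given $\hat{x}$ and $z$, use linearity of conditional expectation for \eqref{eq:E[fz]}, and use the affine-transformation identity for \eqref{eq:Var[fz]} (which the paper simply derives by expanding $\mathbb{E}\bigl[(f^z-\mathbb{E}[f^z])(f^z-\mathbb{E}[f^z])^{\mathsf{T}}\bigr]$ and noting the bias cancels). Your finite-dimensional-distribution argument for joint Gaussianity across query points is a mild strengthening that the paper leaves implicit, but it is the same underlying closure-under-affine-maps idea.
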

\begin{proof}
For the barrier states dynamics in \eqref{eq:zdot-barrier state} and the GP model in \eqref{eq:system's Gaussian model}, we have its mean to be
\begin{equation*}
    \begin{split}
        \mathbb{E}[f^{z}|\hat{x}] = & \mathbb{E} \big[B'\big(B^{-1}(z+\beta_0) \big) h_x f(x,u) \\ 
      &- \gamma \big(z+\beta_0 - B(h(x)) \big) \big|\hat{x} \big] \\
      = & \mathcal{B}(x,z) {\mathbb{E}}[f(x,u)|\hat{x}]  - \gamma \big(z+\beta_0 - B(h(x)) \big)
    \end{split}
\end{equation*}
For notational convenience, the arguments are dropped unless necessary. Then, the variance is given by
% The resize box is to make the equations fit in the column
\begin{equation*}
\resizebox{1.03\hsize}{!}{$
    \begin{split}
        \mathbb{V}[f^{z}] = & \mathbb{E} \Big[ \big(f^z - \mathbb{E}[f^z ] \big) \big(f^z - \mathbb{E}[f^z ] \big)^{\mathsf{T}}  \Big] \\ 
        = & \mathbb{E} \Big[ \Big( \mathcal{B} f - \gamma \big(z+\beta_0 - B(h) \big) -  \mathcal{B} {\mathbb{E}}[f]  + \gamma \big(z+\beta_0 - B(h) \Big) \\
        & \Big( \mathcal{B} f - \gamma \big(z+\beta_0 - B(h) \big) -  \mathcal{B} {\mathbb{E}}[f]  + \gamma \big(z+\beta_0 - B(h) \Big)^{\mathsf{T}}  \Big] \\
        = & \mathbb{E} \Big[  \mathcal{B} \big( f - \mathbb{E}[f ] \big)  \big( f - \mathbb{E}[f ] \big)^{\mathsf{T}} \mathcal{B}^{\mathsf{T}}  \Big]   \\
        = & \mathcal{B} \ \mathbb{E} \Big[\big( f - \mathbb{E}[f ] \big)  \big( f - \mathbb{E}[f] \big)^{\mathsf{T}} \Big]  \mathcal{B}^{\mathsf{T}}  = \mathcal{B}  \mathbb{V}[f] \mathcal{B}^{\mathsf{T}}
    \end{split}$}
\end{equation*}
%\begin{equation*}
%    \begin{split}
%        \mathbb{V}[f^{z}|\hat{x}] = & \mathbb{E} \Big[ \big(f^z - \mathbb{E}[f^z | \hat{x}] \big) \big(f^z - \mathbb{E}[f^z | \hat{x}] \big)^{\mathsf{T}} \Big| \hat{x} \Big] \\ 
%        = & \mathbb{E} \Big[ \Big( \mathcal{B}(x,z) f(x,u) - \gamma \big(z+\beta_0 - B(h(x)) \big) -  \\
%        & \mathcal{B}(x,z) {\mathbb{E}}[f^*(x,u)|\hat{x}]  + \gamma \big(z+\beta_0 - B(h(x)) \Big) \\
%        & \Big( \mathcal{B}(x,z) f(x,u) - \gamma \big(z+\beta_0 - B(h(x)) \big) -  \\
%        & \mathcal{B}(x,z) {\mathbb{E}}[f^*(x,u)|\hat{x}]  + \gamma \big(z+\beta_0 - B(h(x)) \Big)^{\mathsf{T}} \Big| \hat{x} \Big] \\
%        = & \mathbb{E} \Big[  \mathcal{B}(x,z) \big( f(x,u) - \mathbb{E}[f^* | \hat{x}] \big)  \big( f(x,u) - \mathbb{E}[f^* | \hat{x}] \big)^{\mathsf{T}} \mathcal{B}^{\mathsf{T}}(x,z) \Big| \hat{x} \Big]   \\
%        = & \mathcal{B}(x,z) \ \mathbb{E} \Big[\big( f(x,u) - \mathbb{E}[f^* | \hat{x}] \big)  \big( f(x,u) - \mathbb{E}[f^* | \hat{x}] \big)^{\mathsf{T}}\Big| \hat{x} \Big]  \mathcal{B}^{\mathsf{T}} (x,z) \\
%        = & \mathcal{B}(x,z)^{\rm{T}}  \mathbb{V}[f^*(x,u)|\hat{x}] \mathcal{B}(x,z)
%    \end{split}
%\end{equation*}
which completes the proof.
\end{proof}

\begin{comment}
Thus, for a time instant $t$, 
\begin{align} 
 {\mathbb{E}}[f^*_t] & = {\rm{K}}(\hat{x}_t^*,\hat{X})\big({\rm{K}}(\hat{X},\hat{X}) + \sigma I \big)^{-1} F \\
 \mathbb{V}[f^*_t] & = {\rm{K}}(\hat{x}_t^*,\hat{x}_t^*) - {\rm{K}}(\hat{x}_t^*, \hat{X}) \big(K(\hat{X},\hat{X})+\sigma I\big)^{-1} {\rm{K}}(\hat{X},\hat{x}_t^*)
\end{align}
Therefore, the barrier state dynamics is now a Gaussian random process characterized by the state-input's distribution %at the $t^{\rm{th}}$ time instant given by (should we use Discrete barrier state \cite{Almubarak2021safeDDP}?)
%\begin{align} \label{E[zk]}
    %{\mathbb{E}}[z_{k+1}] = z_k + {\mathbb{E}}[z_k] 
%    {\mathbb{E}}[z_{k+1}] & = B \circ h ({\mathbb{E}}[f_k]) \\ \label{Var[zk]}
%    \textcolor{red}{\mathbb{V}[z_{k+1}]} & = \Big( B \circ h \big( \sqrt{\mathbb{V}[f_k]} \ \big) \Big)^2
%\end{align}
%Alternatively, we can use a discretization of the continuous time BaS:

%If stabilization is desired, 
%\begin{align} \label{E[zdot] with phi1}
%     {\mathbb{E}}[\dot{z}] & = \phi_0(z+\beta(0)) h_x {\mathbb{E}} [f(x,u)] -\gamma \phi_1(z+\beta(0),h(x))\\ \label{Var[zdot] with phi1}
%     \mathbb{V}[\dot{z}] & = (\phi_0(z+\beta(0)) h_x)^{\rm{T}}  \mathbb{V}[f(x,u)] \phi_0(z+\beta(0)) h_x 
%\end{align}
\end{comment}

%\textcolor{orange}{Note that now the barrier state $z$ is a random variable but not necessarily Gaussian. (do we need to keep this note?!)} 

%the GP input $\hat{x}\sim  \mathcal{N}(\mu_{\hat{x}}, \Sigma_{\hat{x}})$.

Notice that the GPs posterior so far is a one-step prediction model and therefore the proposed GP modeled BaS here does not account for propagation of uncertainty and assumes access to $\hat{x}$, i.e. states measurement $x$ and control signal $u$. For trajectory optimization, it must be noted that the GPs need to be propagated forward in time and hence multiple-step ahead prediction is needed.

\subsection{Gaussian Process Barrier States (GP-BaS) In Trjaectory Optimization}
In trajectory optimization, long-term predictions are needed in which uncertain inputs are used. This means that the uncertainty of a prediction is compounded when input back into the GP dynamics which also means that a probability distribution is propagated through the nonlinear kernel function. As a result, the posterior's distribution is now non-Gaussian with intractable integral. This calls for approximations of the GPs predictions. One could use a Monte-Carlo approximation, moment matching or Taylor series expansion of the posterior GP mean function. We resort to the latter solution, which is effectively an approximation of a Gaussian model, due to its computational advantage \cite{girard2002gaussian,deisenroth2013gaussian}.
% We circumvent this by simply propagating forward the mean of the Gaussian process, however this comes at a cost of accuracy in the forward propagation. The next step for trajectory optimization is the linearization of the BaS dynamics around the mean of the input data, which is effectively an approximation of a Gaussian model \cite{deisenroth2013gaussian}.
%Let $\mu =\mathbb{E}[f]$ be the mean function of $\bar{f}$ and $\Sigma$ to be its covariance. 

Following the derivations in \cite{girard2002gaussian}, we consider the first order Taylor expansion of %$\mu^z({\hat{x}})$ and $\Sigma^z(\hat{x})$
of the posterior around the GP mean $\mu_{\hat{x}}$, where the subscript indicates the evaluation under $\hat{x}$:
%\begin{align}\label{eq:linearized_fz}
%    f^{z} \approx f^{z}(\mathbb{E}[f]) + \nabla f^{z} (\mathbb{E}[f])(x-\mathbb{E}[f]) + \text{H.O.T.} %+ \mathcal{O} (\hat{x}^2,z^2)
%\end{align}
%where H.O.T. stands for higher order terms and
%, $\mu  = \mathbb{E}[f(\hat{x})]$ is the mean of the Gaussian Process posterior, while $\Sigma = \mathbb{V}[f]$ is the posterior covariance.  
%Then, the linearized mean and variance of $f^{z}$ are given by:
\begin{equation*}
    \mu^z(\mu_{\hat{x}}, \Sigma_{\hat{x}}) \approx \mu^z(\mu_{\hat{x}}) + \frac{\partial \mu^{z}}{\partial \hat{x}} \Big|_{\hat{x}=\mu_{\hat{x}}}   (x-\mu_{\hat{x}}) 
\end{equation*}
and therefore
\begin{equation}
    \mathbb{E}[f^{z} | \mu_{\hat{x}}, \Sigma_{\hat{x}}] =  \mathbb{E} [\mu^z(\mu_{\hat{x}}, \Sigma_{\hat{x}})] \approx \mu^z(\mu_{\hat{x}})
\end{equation}
Similarly for the variance,
\begin{equation*}
    \Sigma^z(\mu_{\hat{x}}, \Sigma_{\hat{x}}) \approx \Sigma^z(\mu_{\hat{x}}) + \frac{\partial \Sigma^{z}}{\partial \hat{x}} \Big|_{\hat{x}=\mu_{\hat{x}}} (x-\mu_{\hat{x}}) 
\end{equation*}
leading to
\begin{equation}
    \mathbb{V}[f^{z} | \mu_{\hat{x}}, \Sigma_{\hat{x}}] \approx  \Sigma^z(\mu_{\hat{x}}) + \frac{\partial \mu^{z}}{\partial \hat{x}} \Big|_{\hat{x}=\mu_{\hat{x}}}^{\mathsf{T}} \Sigma^z_{\hat{x}} \frac{\partial \mu^{z}}{\partial \hat{x}} \Big|_{\hat{x}=\mu_{\hat{x}}}
\end{equation}

%\begin{align*}
%    \mathbb{V}[f^{z} | \mu_{\hat{x}}, \Sigma_{\hat{x}}] &= \mathbb{E}[(f^{z}-\mathbb{E}[f^z| \mu_{\hat{x}}, \Sigma_{\hat{x}}])^{2}| \mu_{\hat{x}}, \Sigma_{\hat{x}}] \\
%    &\approx \mathbb{E}[\big(f^{z}_{\hat{x}}(\mu)(\hat{x}-\mu)\big)^{2}|\mu, \Sigma] \\
%    &= f^{z}_{\hat{x}}(\mu)^{2}\mathbb{E}[(\hat{x}-\mu)^{2}|\mu, \Sigma] = f^{z}_{\hat{x}}(\mu)^{2} \Sigma
%\end{align*}
%\begin{align*}
%    \mathbb{E}[f^{z}|\mu, \Sigma] &\approx f^{z}(\mu)
%    +\mathbb{E}[f^{z}_{\hat{x}}(\mu)(\hat{x}-\mu)|\mu, \Sigma]=f^{z}(\mu),\\
%    \mathbb{V}[f^{z}|\mu, \Sigma] &= \mathbb{E}[(f^{z}(\hat{x})-\mathbb{E}[f(\hat{x})])^{2}|\mu,\Sigma] \\
%    &\approx \mathbb{E}[\big(f^{z}_{\hat{x}}(\mu)(\hat{x}-\mu)\big)^{2}|\mu, \Sigma] \\
%    &= f^{z}_{\hat{x}}(\mu)^{2}\mathbb{E}[(\hat{x}-\mu)^{2}|\mu, \Sigma] = f^{z}_{\hat{x}}(\mu)^{2} \Sigma, 
%\end{align*}
%where the approximation is done by substituting \eqref{eq:linearized_fz} for $f^{z}$.
%where $\nabla$ is the partial derivative of $f^{z}$ evaluated at the mean of the GP posterior.

\subsection{Probabilistic Safety Guarantees Using GP-BaS}
For the learned dynamics, we can have a high probability confidence interval of the unmodeled dynamics $f(x,u)$:
\begin{equation}  \label{eq:confidence interval set for f}
     {\mathbb{E}}[f] - \phi_\rho \sigma[f] \leq f \leq {\mathbb{E}}[f] + \phi_\rho \sigma[f]
\end{equation}
where $\sigma[\cdot]$ is the standard deviation and $\phi_\rho$ is a user selected variable that determines the confidence level. Similarly, for the barrier states, given a desired safety probability $\rho \in (0, 1)$, we can define the quantile function
\begin{equation}
    \Phi^{-1}(p) = \mathbb{E}[f^{z}] + \phi_\rho \sigma[f^z] 
\end{equation}
where $\Phi^{-1}(\rho)$ is associated with the inverse error function $\Phi^{-1}(\rho) = \sqrt{2} \text{erf}^{-1}(2\rho-1)$ such that $f^z$ exceeds $\mathbb{E}[f^{z}] + \phi_\rho \sigma[f^z]$ with probability $1-\rho$ and lie outside $\mathbb{E}[f^{z}] \pm \phi_\rho \sigma[f^z] $ with probability $2(1-\rho)$.
%, also known as the quantile function $\Phi^{-1}(\rho)$, for a given desired probability $\rho$, which is the inverse of the cumulative distribution function.  have 
%\begin{equation}  \label{eq:confidence interval set for fz}
%     {\mathbb{E}}[f^{z}] - \phi_\rho \sigma[f^{z}] \leq f^z \leq {\mathbb{E}}[f^{z}] + \phi_\rho \sigma[f^{z}]
%\end{equation}
Therefore, by guaranteeing boundedness of the GP-BaS resulted from the dynamics upper bound ${\mathbb{E}}[f^{z}] + \phi_\rho \sigma[f^{z}]$ we can guarantee safety with high probability. This is proven by the following Lemma.
\begin{lemma} \label{lemma: boundedness of bas through gp-bas}
Define $\widetilde{f^z} := {\mathbb{E}}[f^{z}] + \phi_\rho \sigma[f^{z}]$ and therefore with a probability $\rho$ we have that $f^z \leq \widetilde{f^z} \ \forall t$. Let $\widetilde{z}$ be the fictitious GP-BaS that evolves according to the dynamics $\widetilde{f^z}$, i.e. $\widetilde{z}(t) = \int_{0}^{t} \widetilde{f^z}(\tau) d\tau$ with $\widetilde{z}(0) = \beta(x(0)) - \beta(0), \ x(0) \in \mathcal{S}$. Moreover, assume that the resulting $\widetilde{z}$ is rendered bounded. Then, the true barrier state $z$ is bounded with probability $\rho$ for all $t$. 
\end{lemma}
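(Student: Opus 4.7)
The plan is to reduce the lemma to an application of the comparison principle for ODEs together with a probabilistic conditioning argument. The probabilistic content has already been extracted in the statement: the event $\mathcal{E} := \{f^z(\tau) \leq \widetilde{f^z}(\tau) \text{ for all } \tau \geq 0\}$ has probability at least $\rho$. So from here on, the proof is essentially a pathwise/deterministic argument conditional on $\mathcal{E}$, and the probabilistic conclusion follows by taking $\mathbb{P}(\mathcal{E}) \geq \rho$ at the end.

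First I would observe that on the event $\mathcal{E}$, $z$ and $\widetilde z$ satisfy $\dot z(t) = f^z(t) \leq \widetilde{f^z}(t) = \dot{\widetilde z}(t)$ with the \textbf{matched initial condition} $z(0) = \widetilde z(0) = \beta(x(0)) - \beta(0)$, which is finite because $x(0) \in \mathcal{S}$ and $\beta = B \circ h$ is a valid barrier function on $\mathcal{S}$. Integrating the inequality from $0$ to $t$ gives $z(t) \leq \widetilde z(t)$ for all $t \geq 0$ on $\mathcal{E}$; equivalently this is a direct application of the scalar comparison lemma for ODEs. Since by assumption $\widetilde z$ is rendered bounded above by some $M < \infty$, we conclude $z(t) \leq M$ for all $t$ on the event $\mathcal{E}$.

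Next I would take care of boundedness from below. For standard barrier constructions (e.g.\ $B(h) = -\log(h)$ or $B(h) = 1/h$, shifted by $\beta_0$), the function $z = \beta(x) - \beta_0$ is bounded below by a constant (for instance $-\beta_0$) on $\mathcal{S}$; combined with the upper bound just established, this gives pathwise boundedness of $z$ on $\mathcal{E}$. In particular, $z$ remaining finite means $h(x(t))$ stays strictly positive, so the trajectory does not leave $\mathcal S$, which in turn justifies that the comparison argument remains valid throughout $[0,\infty)$ (no finite escape time forcing $\beta \to \infty$ before we can close the bound).

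Finally, since the entire deterministic argument holds conditional on $\mathcal{E}$, and $\mathbb{P}(\mathcal{E}) \geq \rho$ by construction of $\widetilde{f^z}$ from the quantile $\Phi^{-1}(\rho)$, we conclude that $z$ is bounded for all $t$ with probability at least $\rho$. The combination with \autoref{prop: safety if bas is bounded} then yields the probabilistic safety guarantee advertised in the paper, though that step is outside this lemma. The main subtlety I would watch for is the interaction between the probabilistic bound $f^z \leq \widetilde{f^z}$, which is stated as holding \emph{uniformly in time} with probability $\rho$, and the pathwise comparison argument: if one instead only had pointwise-in-time guarantees $\mathbb{P}(f^z(t) \leq \widetilde{f^z}(t)) \geq \rho$ for each $t$, one could not invoke the comparison lemma over a trajectory. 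The lemma as written sidesteps this by assuming the stronger uniform form, so the proof goes through cleanly.
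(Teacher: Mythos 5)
Your proof is correct and follows essentially the same route as the paper's: condition on the event that $f^z \leq \widetilde{f^z}$ uniformly in time (probability $\rho$), integrate the inequality from matched initial conditions to get $z(t) \leq \widetilde{z}(t)$, and conclude boundedness of $z$ from boundedness of $\widetilde{z}$. The only difference is that you additionally address the lower bound on $z$ over $\mathcal{S}$ and the finite-escape-time issue, which the paper's shorter proof leaves implicit.
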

\begin{proof}
With a probability $\rho$, we have 
\begin{align}
 f^z(t) \leq \widetilde{f^z}(t) \ \forall t 
 \Rightarrow & \int_{0}^{t} f^z(\tau) d\tau \leq \int_{0}^{t} \widetilde{f^z}(\tau) d\tau \ \forall t
\end{align}
Since $\tilde{z}(t) = \int_{0}^{t} \widetilde{f^z}(\tau) d\tau$ is bounded, then 
\begin{align}
 \int_{0}^{t} f^z(\tau) d\tau \leq \tilde{z}(t) <\infty \ \forall t \Rightarrow z(t) < \infty \ \forall t
\end{align}
%By continuity assumptions on the BaS components, as $f^z$ is bounded above by $\widetilde{f^z}$, then $z$ is bounded above by $\tilde{z}$. 
with a probability $\rho$.
\end{proof}
Now, we are in position to develop the safety embedded Gaussian Process dynamical model (GPDM) used to design the safe feedback controller:
\begin{equation} \label{eq:safety embedded Gaussian system}
    \bar{f} := \mathcal{GP}\big(\mathbb{E}[\bar{x}], \mathbb{COV}[\bar{x}] \big)
\end{equation}
where $\bar{x} = \begin{bmatrix} x & z \end{bmatrix}^{\mathsf{T}} \in \bar{\mathcal{X}} \subseteq \mathcal{X} \times \mathcal{Z}$ and $\bar{f} = \begin{bmatrix} f & f^z \end{bmatrix}^{\mathsf{T}} :   \bar{\mathcal{X}} \times \mathcal{U} \rightarrow  \bar{\mathcal{X}}$.

For the safety embedded GPDM, using the aforementioned Lemmas, we propose the main Theorem.
\begin{theorem} \label{theorem: safe and stable if GP embedded system is}
Consider the unknown safety-critical dynamical system in \eqref{eq:dynamics} modeled by the Gaussian processes \eqref{eq:system's Gaussian model} and the barrier states defined by \eqref{eq:zdot-barrier state} modeled with the GP-BaS in \eqref{eq:E[fz]}-\eqref{eq:Var[fz]}. Assume there exists a continuous feedback controller $u=K(\bar{x})$ such that the trajectories of the safety embedded Gaussian system \eqref{eq:safety embedded Gaussian system} are bounded. Then, \textbf{with probability $\rho$}, the trajectories of the safety-critical system \eqref{eq:dynamics} are safe. 
%Moreover, in the case of stabilization, \textbf{with probability $p$}, $\exists \mathcal{X}_{\text{safe}} \subset \mathcal{X}$ in the neighborhood of the origin of the safety-critical system \eqref{eq:dynamics} in which $u=K(\bar{x})$ is safe, i.e. it renders the safe set $\mathcal{S}$ forward invariant.
\end{theorem}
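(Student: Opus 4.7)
The plan is to chain the two preceding results: Lemma~\ref{lemma: boundedness of bas through gp-bas} promotes boundedness of the upper-confidence fictitious barrier state $\tilde{z}$ into probabilistic boundedness of the true barrier state $z$, while Proposition~\ref{prop: safety if bas is bounded} converts boundedness of $z$ into forward invariance of the safe set $\mathcal{S}$ for the original safety-critical system. The probability level $\rho$ that appears in the conclusion is inherited directly from the $\rho$-confidence envelope used to define $\widetilde{f^{z}}$ in Lemma~\ref{lemma: boundedness of bas through gp-bas}.

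First, I would unpack what "bounded trajectories of the safety embedded Gaussian system" means component-wise for the augmented state $\bar{x} = [x \ z]^{\mathsf{T}}$ under the closed-loop controller $u = K(\bar{x})$. The $z$-coordinate follows the GP-BaS with mean $\mu^{z}$ and covariance $\Sigma^{z}$ given in \eqref{eq:E[fz]}--\eqref{eq:Var[fz]}. Reading "bounded" as uniform boundedness of the high-confidence envelope $\mathbb{E}[f^{z}] \pm \phi_{\rho}\,\sigma[f^{z}]$ along the trajectory, I would extract boundedness of the fictitious process $\tilde{z}(t) = \int_{0}^{t} \widetilde{f^{z}}(\tau)\,d\tau$ driven by the upper bound $\widetilde{f^{z}} = \mathbb{E}[f^{z}] + \phi_{\rho}\,\sigma[f^{z}]$. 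This is exactly the hypothesis of Lemma~\ref{lemma: boundedness of bas through gp-bas}; note that only the upper bound is required, since what we must rule out is $z$ growing without bound, which is the analytic signature of a constraint violation.

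Applying Lemma~\ref{lemma: boundedness of bas through gp-bas} then yields $z(t) < \infty$ for all $t$ with probability $\rho$. Invoking Proposition~\ref{prop: safety if bas is bounded} on the true closed-loop system $\dot{x} = f(x, K(\bar{x}))$ converts this into $x(t) \in \mathcal{S}$ for all $t$ with the same probability $\rho$, closing the proof. The main obstacle I anticipate sits in the first step: rigorously translating "bounded trajectories of the GP embedded system" into the deterministic boundedness of $\tilde{z}$ that Lemma~\ref{lemma: boundedness of bas through gp-bas} demands. A controller that only renders the posterior mean trajectory bounded would not suffice; an additional uniform bound on the propagated variance $\Sigma^{z}$ along the trajectory is tacitly needed, which is precisely what the Taylor-series approximation of the preceding subsection makes tractable in the multi-step prediction setting. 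If boundedness is interpreted from the outset in the robust-envelope sense, the argument reduces to a clean composition of Lemma~\ref{lemma: boundedness of bas through gp-bas} and Proposition~\ref{prop: safety if bas is bounded}, and the probability $\rho$ propagates transparently from the confidence level chosen for $f$ in \eqref{eq:confidence interval set for f}.
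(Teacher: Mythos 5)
Your proof is correct and takes essentially the same route as the paper, whose own proof simply states that the result follows directly from Proposition~1, Theorem~1, Lemma~1 and Lemma~2 --- your explicit chaining of Lemma~2 (probabilistic boundedness of $z$ from boundedness of the fictitious $\tilde{z}$) into Proposition~1 (boundedness of $z$ implies forward invariance of $\mathcal{S}$), with $\rho$ inherited from the confidence envelope, is exactly that composition spelled out. Your remark that ``bounded trajectories'' must be read in the confidence-envelope sense rather than as mean-boundedness alone is a sharpening the paper leaves implicit, not a departure from its argument.
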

\begin{proof}
The results follow directly from \autoref{prop: safety if bas is bounded}, \autoref{theorem: original safe if stable theorem}, \autoref{lemma: gaussian bas lema} and \autoref{lemma: boundedness of bas through gp-bas}. % More specifically, since there exists a continuous feedback controller such that the trajectories of the safety embedded Gaussian system are bounded, then by \autoref{lemma: gaussian bas lema}, 
\end{proof}

\noindent\textbf{Remark.} An interesting result of the prior development is the fact that the controller does not need to asympotatically stabilize the barrier state $z$. A controller that guarantees \textit{boundedness} of the GP-BaS dynamics will be enough to ensure safety of the unmodeled system under the defined probability.

\noindent\textbf{Remark.} Notice that the results are provided for the general Gaussian dynamics \eqref{eq:safety embedded Gaussian system}. When computing controls that utilize the GP-BaS, the GPDM must be propagated with its uncertainty to meet the safety guarantees proposed in this work.

% The notion of probability of safety also drives the requirements for training the Gaussian Process on the unmodeled system. To achieve safety under a given probability, the variance of the estimate in the state space of operation must be below the developed bound.

% In the next sections ....

% \subsection{Trajectory Optimization with Uncertainty Propagation}

% Then, we can have the following theorem (analogous to \autoref{theorem: safe and stable if GP embedded system is}). 

Finally, we provide the gradients of the GP-BaS system to be utilized in gradient-based control schemes such as the linear quadratic regulator (LQR) and differential dynamic programming (DDP). For linearization of the GP model, the reader may refer to the literature, e.g. \cite{deisenroth2013gaussian,pan2015data} and there references therein. Therefore, we can construct the embedded gradients used in the LQR design for the stabilization problem according to
\begin{gather} \label{eq: augmented jacobian - fx}
    \bar{f}_{\bar{x}}=\begin{bmatrix} f^*_x & 0_{n \times 1} \\ B' \circ B^{-1}(\beta_0) h_x(0) f^*_x -\gamma B'(h_x(0)) & -\gamma \end{bmatrix}  \\ \label{eq: augmented jacobian - fu}
    \bar{f}_{u}=\begin{bmatrix} f^*_u \\ B' \circ B^{-1}(\beta_0) h_x(0)  f^*_u \end{bmatrix}
\end{gather}
where $f^*_x$ and $f^*_u$ are the gradients of the GPs predictive mean which can be computed numerically or analytically as derived in \cite{deisenroth2013gaussian}. We provide rough algorithms, Algorithm \ref{Algorithm: GP-BaS LQR} and Algorithm \ref{Algorithm: GP-BaS DDP} for the control design process used in this paper. The dynamics and gradients of barrier states for the discrete time formulation used for trajectory optimization in this paper are derived in the Appendix \ref{Section: Appendix}.

\begin{algorithm} [h] 
\SetAlgoLined
 \KwData{Training pairs $(\hat{X}, F)$}
 \KwResult{Safe control $u=K(\bar{x})$}
Train $n$ GP models and optimize hyperparamters\;
Select safety probability $p$ and get quantile $\phi_p$\; 
Linearize GP dynamics around nominal trajectory\;
 Obtain GP-BaS Jacobian matrices $f^z_x$, $f^z_u$ and augment to linearized GP dynamics \;
 Use augmented gradients \eqref{eq: augmented jacobian - fx}-\eqref{eq: augmented jacobian - fu} to compute LQR gains\;
\caption{GP-BaS Stabilization with LQR} \label{Algorithm: GP-BaS LQR}
\end{algorithm}
\vspace{-4mm}
\begin{algorithm} [h] 
\SetAlgoLined
 \KwData{Training pairs $(\hat{X}, F)$}
 \KwResult{Safe control $u=K(\bar{x})$}
Train $n$ GP models and optimize hyperparamters\;
Select safety probability $p$ and get quantile $\phi_p$\; 
Define nominal input trajectory and propagate GP dynamics\;
\While{$\Delta V > \epsilon$}{
 Compute running and terminal costs at $k=N$\;
 \For{$k=N-1$ to $1$}{
 Linearize GP dynamics around nominal point $(\tilde{x}_k, \tilde{u}_k)$\;
 Obtain GP-BaS Jacobian matrices $f^z_x$, $f^z_u$ at $(\tilde{x}_k, \tilde{u}_k)$ and augment to linearized GP dynamics\;
 Compute cost quadratization matrices\;
 Compute feed-forward and feedback local optimal variation\;
 }
 \For{$k=1$ to N-1}{
 Forward propagate safety embedded GPDM \eqref{eq:safety embedded Gaussian system} with 
  Backtracking line-search}
 Update $\Delta V$, nominal $\tilde{\bar{x}}$ and nominal $\tilde{u}$
 }
 Apply $u= u_{ff} + u_{fb}(\bar{x})$ through propagating GP-BaS to get optimized trajectory.
\caption{GP-BaS Trajectory Optimization with DDP} \label{Algorithm: GP-BaS DDP}
\end{algorithm}

\section{Applications and Examples} \label{Section: Applications and Examples}
The following examples were implemented utilizing PyTorch \cite{paszke2019pytorch} and GPyTorch \cite{gardner2018gpytorch}. Specifically, the Gaussian process models utilized are task independent variational Gaussian processes. The approximate nature of these models allows us to scale to large amounts of training data, and we ignore coupling between states when learning multi-dimensional models. The mean function utilized in all methods was the constant mean function, and we utilized the squared exponential kernel with automatic relevance detection for our covariance models.

\subsection{Safe Control of Unmodeled Systems}

\subsubsection{Optimal Safe Control of Unstable Linear System}
As a proof of concept, we start with an unstable linear system that needs to be safely and optimally stabilized. The optimal control problem seeks a safe feedback controller that minimizes
\begin{align*}
& V(x(0),u(t))= \int_{0}^{\infty} 0.1 (x_1^2+x_2^2) + 0.01 u^2\; dt \\
& \text{subject to }  \begin{bmatrix} \dot{x}_1 \\ \dot{x}_2 \end{bmatrix}= \begin{bmatrix} 1 & -5 \\ 0 & -1\end{bmatrix} \begin{bmatrix} x_1 \\ x_2 \end{bmatrix} + \begin{bmatrix} 0 \\ 1\end{bmatrix} u \\
& \hspace{1.5cm} (x_1 - 2)^2 + (x_2 - 2.2)^2 - 1 >0
\end{align*}
Notice that the model is unstable and is completely unknown in which we need to learn $2$ independent GP models, one for each state. Note that the optimal controller is now a function of the barrier states as well, which is propagated at each time according to its Gaussian model. The proposed framework was used with a  linear quadratic regulator (LQR) to solve the problem. \autoref{fig:linear system} shows the effectiveness of the proposed approach in generating safe trajectories using only $125$ data points in the state-input space generated randomly in a range of $(-10,10)$ for each GP input. It is also shown that using the covariance in propagating the BaS truly generates a more conservative solution as we should expect. This is also shown in the BaS progression in the bottom figure in which the BaS with covariance obtains higher values. 

\begin{figure} [tbh]
        \centering
    \subfloat{\includegraphics[trim=30 0 50 50, clip, width=0.55\linewidth]{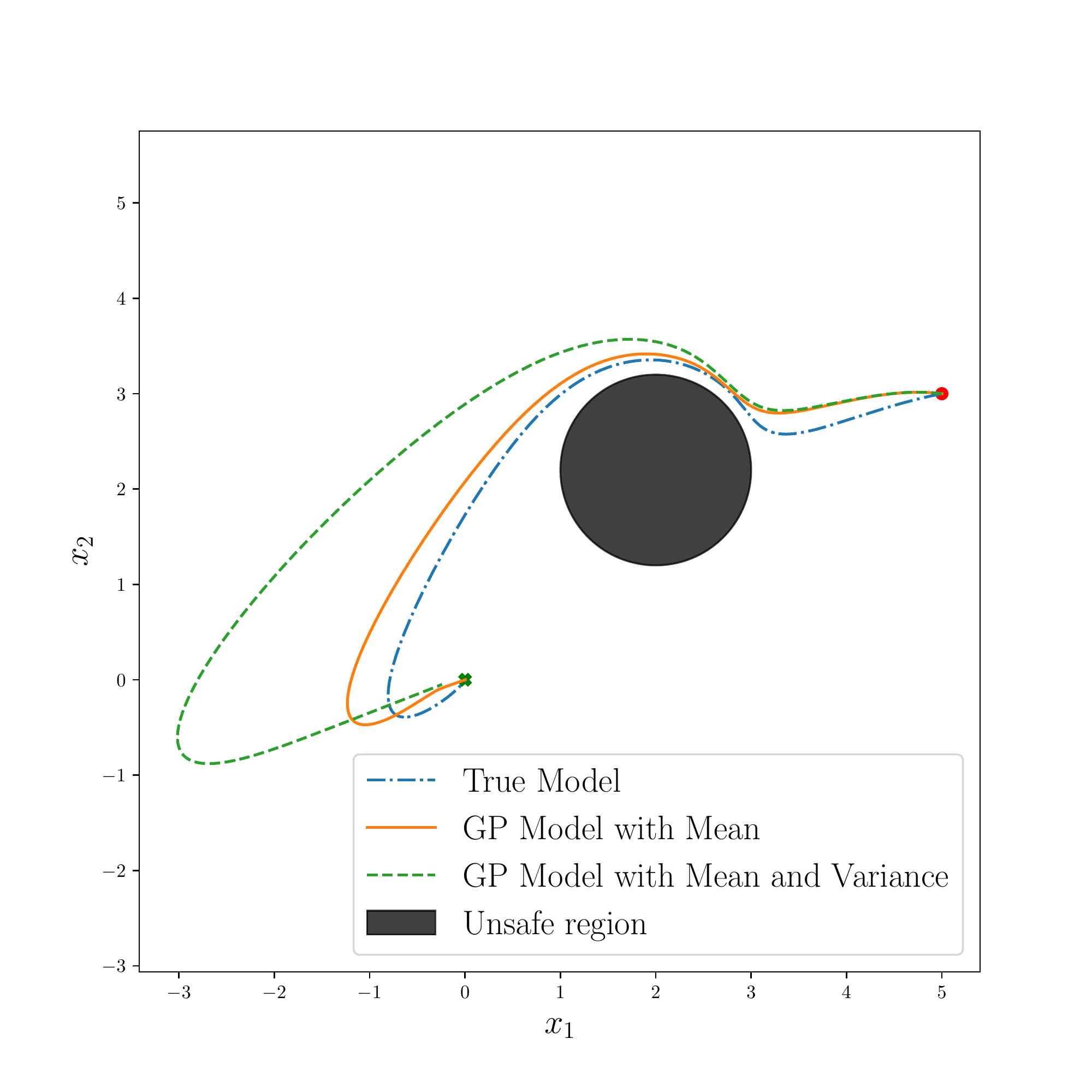}}
    \\ \vspace{-8mm}
    \subfloat{\includegraphics[trim=0 0 0 0, clip, width=.8\linewidth]{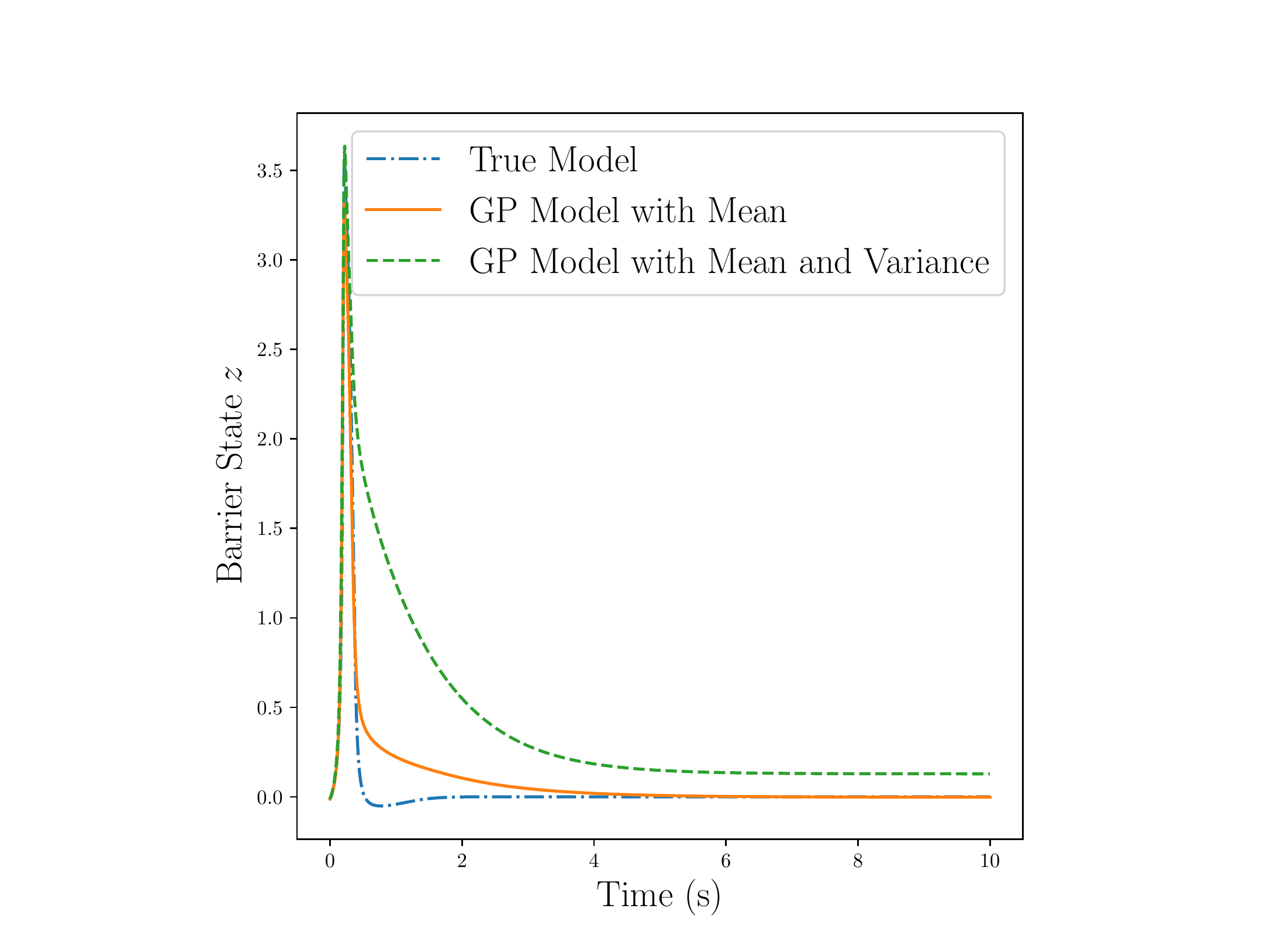}}
    \vspace{-4mm}
   \caption{Closed-loop trajectories of the open-loop unstable linear system under BaS-LQR using true model, GP-BaS-LQR using only GP-BaS mean and using GP-BaS mean and one sigma. It can be seen that the solution is more conservative when the barrier states is propagated with the variance to guarantee safety. The bottom figure shows the progression of the corresponding barrier states.}
      \label{fig:linear system}
      \vspace{-5mm}
\end{figure} 

\subsection{Safe Trajectory Optimization of Unmodeled Systems} 
In this section, we demonstrate how the GP-BaS model can be utilized in conjunction with trajectory optimization to generate safe controls for an unknown system. The method utilized in this section is GPDDP \cite{pan2015data} with discrete barrier states (DBaS) \cite{almubarak2021safeddp}. In the examples, we utilize quadratic cost functions to penalize the systems' states, the barrier states and the control inputs. The basic problem formulation is shown below.
\begin{align*}
\min_{u} J(x,u)  &= \sum^{N-1}_{k=0} \frac{1}{2}\left( x_k^{\text{T}} Q x_k + u_k^{\text{T}} R u_k + x_N^{\text{T}} \phi X_N \right) \\
\textrm{s.t.} \quad & x_{k+1} = f(x_k,u_k)\\
  & h(x_k) > 0 \forall k \in [0, N]
\end{align*}

\subsubsection{Dubins Vehicle}
The system dynamics are given by $\dot x = r\cos\theta(u_1 + u_2) / 2, \dot y = r\sin\theta(u_1 + u_2) / 2, \dot\theta = \frac{r}{2d}(u_1 - u_2)$, where $x$ and $y$ are the robot's coordinates, $\theta$ is its heading, $r=0.2$ is the wheel radius, $d=0.2$ is the wheelbase width, and $u_1$ and $u_2$ are the speeds of the right and left wheels respectively. Notice that this system has an ill-conditioned relative degree which prevents the direct use of CBFs. To learn the vehicle's model, we generated $296$ state-input data points from $4$ sample sinusoidal trajectories of $x$ and $y$. \autoref{fig:dubins 1 obst} shows the success of the proposed frame work to solve the optimization problem using the GP model while maintaining safety and producing a very close solution to the true one. In a more complicated environment, \autoref{fig:dubins- many obst} shows another scenario in which the true model can produce a solution that narrowly passes between the obstacles. It can be seen that the proposed approach could not produce such a solution but generated a more conservative solution to guarantee safety given the uncertainty in the model.

\begin{figure}
        \centering
        \hspace{-2mm}
    \subfloat{\includegraphics[trim=60 0 60 0, clip, width=0.5\linewidth]{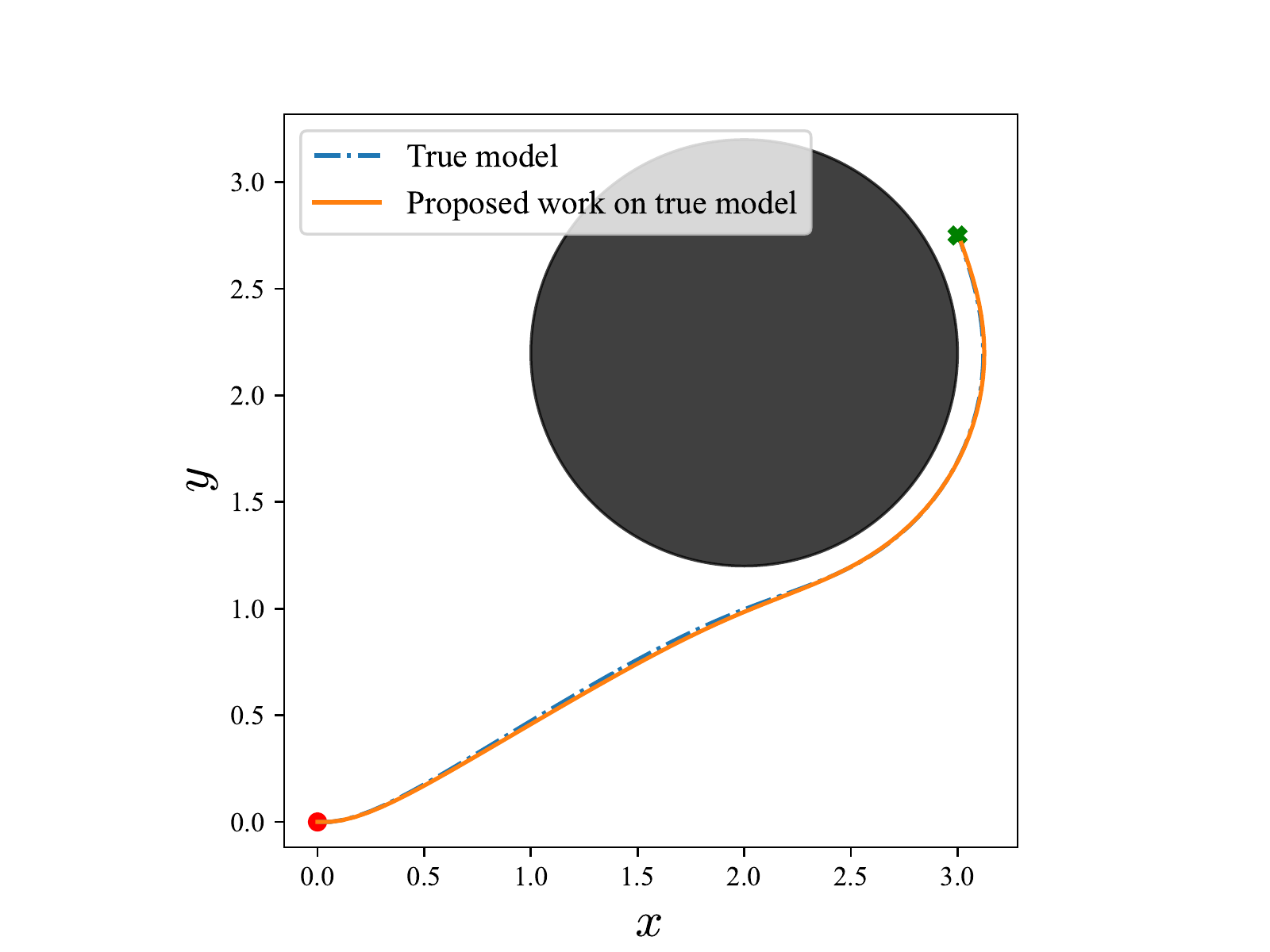}}
    \subfloat{\includegraphics[trim=60 0 60 0, clip, width=.5\linewidth]{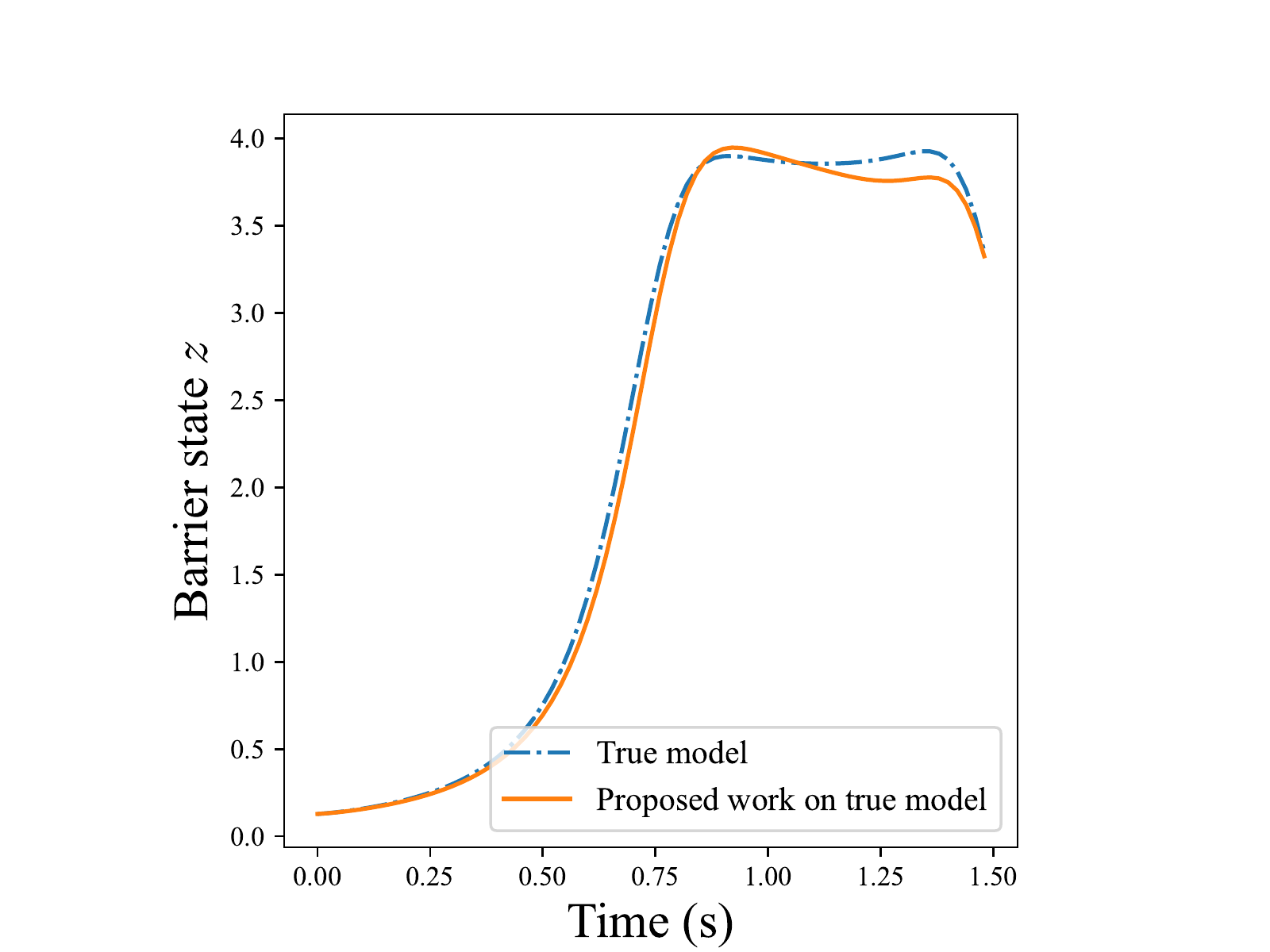}}
   \caption{Closed-loop trajectories of (blue) true quadrotor solution and (orange) GP quadrotor solution. The GP-BaS solution closely estimate the true BaS-DDP solution in such a simple avoidance problem.}
      \label{fig:dubins 1 obst}
\end{figure}

\subsubsection{Quadrotor}
The $12$ state quadrotor system that was utilized for these experiments can be found in \cite{Sabatino2015QuadrotorCM}. Since the kinematics for this model are known, we focus on the dynamics of the model, specifically learning the mapping from input torques and thrust to body accelerations. To learn this model, trajectories were sampled from the true dynamics in an obstacle free scenario. We utilized 7512 training points with 16 induced points in the approximate kernel to capture the dynamics. \autoref{fig:quadrotor} shows successful solution comparing planning obtained using barrier states with DDP on the true model against GP-BaS on the GP model. Clearly, the proposed approach effectively performs the task.

\begin{figure}
        \centering
                \vspace{-18mm}
    \subfloat{\includegraphics[trim=50 90 50 80, clip, width=0.85\linewidth]{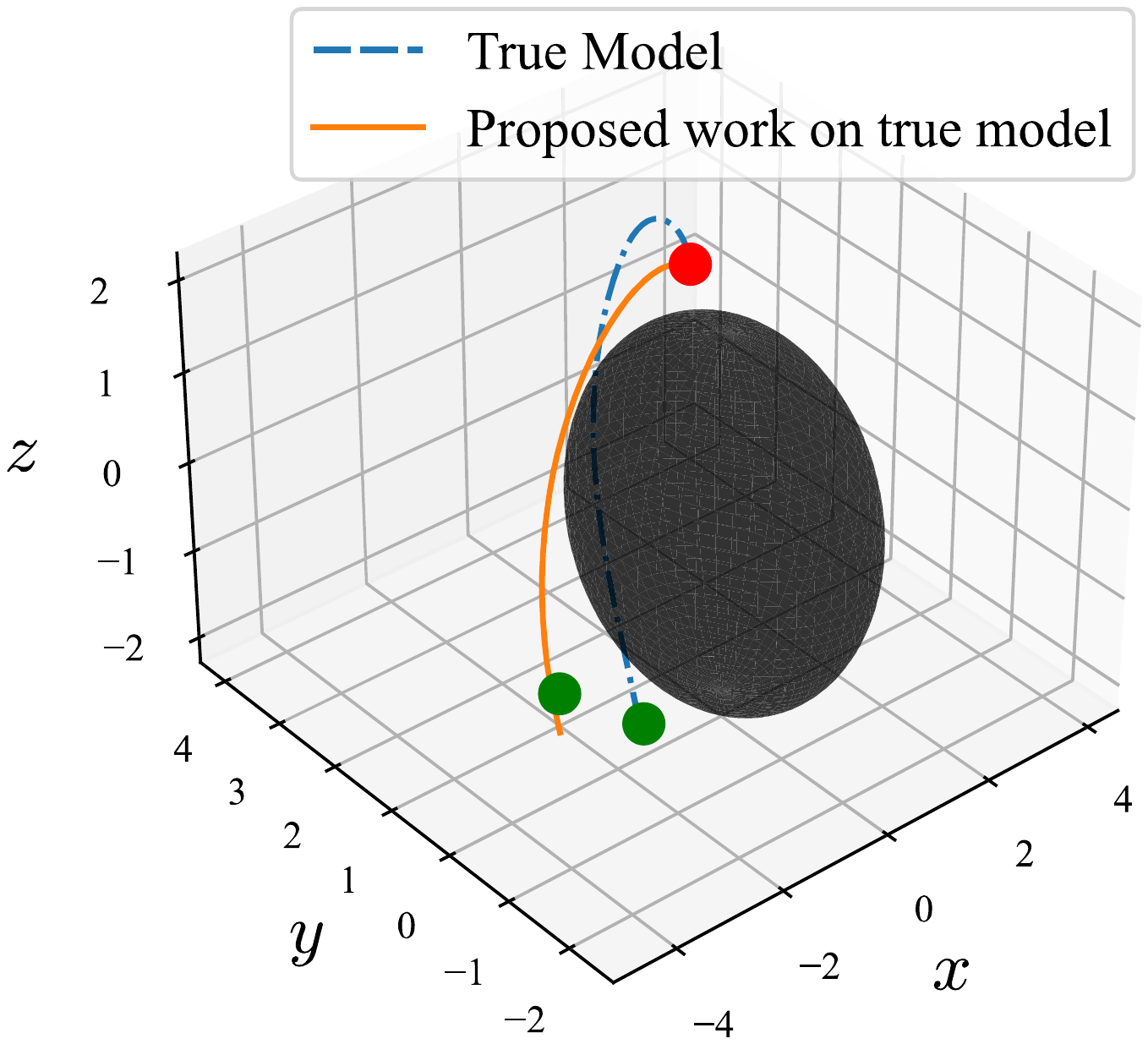}}
        \\ \vspace{-27mm}
    \subfloat{\includegraphics[trim=30 50 50 120, clip, width=0.55\linewidth]{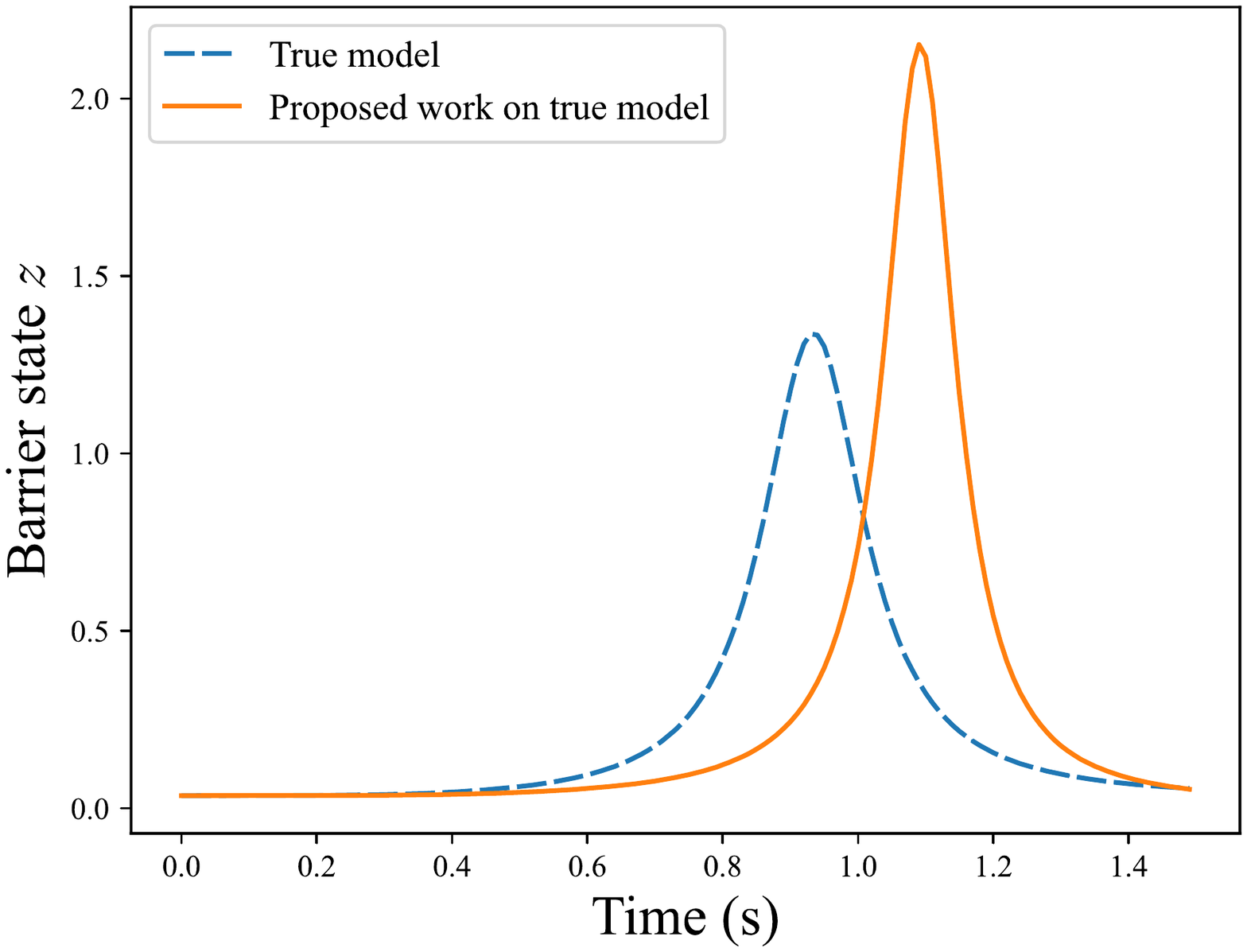}}
    \vspace{-12mm}
   \caption{Closed-loop trajectories of (blue) true quadrotor solution and (orange) GP quadrotor solution. To demonstrate safety with learned GP dynamics, a Gaussian Process was trained and then implemented utilizing GP-BaS. The resultant solution that was computed using the GP dynamics was then tested on the true quadrotor dynamics. It can be seen that the BaS takes large values to push the quadrotor away from the obstacle.}
      \label{fig:quadrotor}
      \vspace{-4mm}
\end{figure} 

\section{Conclusion} \label{Section: Conclusion}
In this work we developed a novel scheme that combines Bayesian learning with barrier states control (GP-BaS). This novel approach circumvents the need for assumptions or special care for the relative degree of the safe set function. Additionally we relax assumptions on the control affinity of the unknown dynamics. We derived probabilistic guarantees of safe control and safe trajectory optimization for unknown systems. To show the efficacy of the proposed work, we considered an optimal safe control design problem and safe trajectory optimization problems of unknown linear and complex nonlinear systems. Future work includes the implementation of this work for online learning, model predictive control, and Probabilistic Differential Dynamic Programming {\cite{Pan2018}}.

%\begin{comment}
    \section{Appendix}
\label{Section: Appendix}
\subsection{Discrete Barrier States}
For the discrete system $f(x_k, u_k)$, the discrete barrier state (DBaS) is given by:
\begin{align}
    w_{k+1} = B \circ h \circ f(x_k, u_k) - \gamma \big(w_k - B \circ h (x_k) \big)
\end{align}
The gradient w.r.t the embedded state (around $\tilde{x}$) is:
\begin{align}
    w_{{k+1}_{\bar{x}}} &= \Big( B'\big(h(\tilde{x}_{k+1})\big) h_x (\tilde{x}_{k+1}) f_x(\tilde{x}_k) \\&+ \nonumber
    \gamma B'\big(h(\tilde{x}_k)\big) h_x (\tilde{x}_k) \Big) \delta x_k - \gamma \delta w_k
\end{align}
where $\tilde{x}_{k+1}= f (\tilde{x}_k, \tilde{u}_k)$. Note that for the Euler \textit{discretized} system $x_{k+1} = x_k + dt f(x_k, u_k)$, the DBaS is given by
\begin{align}
    w_{k+1} &= B \circ h \circ \big(x_k + dt f(x_k,u_k)\big) \\ \nonumber
    &- \gamma \big(w_k - B \circ h (x_k) \big)
\end{align}
Therefore, the gradient w.r.t the embedded state $\bar{x}$ (linearizing around $\tilde{x}$) is:
\begin{align}
    \frac{\partial w_{k+1}}{\partial \bar{x}} &= \Big( B'\big(h(\tilde{x}_{k+1})\big) h_x (\tilde{x}_{k+1}) f^d_x(\tilde{x}_k) \\ \nonumber
    &+ \gamma B'\big(h(\tilde{x}_k)\big) h_x (\tilde{x}_k) \Big) \delta x_k - \gamma \delta w_k
\end{align}
where $\tilde{x}_{k+1}=\tilde{x}_k+dt f (\tilde{x}_k, \tilde{u}_k)$ and $f^d_x(\tilde{x}_k) = \big(I + dt f_x(\tilde{x}_k)\big)$ is the discretized gradient. Note that the gradients of of $B \circ h$ appearing in both terms is different due to the different arguments ($x_{k+1}$ in the first and $x_k$ in the second).
%\end{comment}

%\bibliography{references}  % .bib
\bibliographystyle{IEEEtranN}
\bibliography{IEEEabrv,references}

\end{document}